\newtheorem{theorem}{Theorem}
\newtheorem{cor}[theorem]{Corollary}
\newtheorem{obs}[theorem]{Observation}
\newtheorem{prop}[theorem]{Proposition}
\newtheorem{conj}[theorem]{Conjecture}
\newcommand{\union}{\ensuremath\cup}
\newcommand{\figref}[1]{\figurename~\ref{#1}}
\newcommand{\figurenames}{{\figurename}s}
\newcommand{\figrefs}[2]{\figurenames~\ref{#1} and~\ref{#2}}
\newcommand{\mycase}[1]{\vspace{-1ex}\paragraph{\it Case~#1}\vspace{-1ex}}
\newcommand{\mycasestart}{}
\newcommand{\mycaseend}{}
\title{Empty triangles in good drawings of the complete graph%
}
\author{Oswin~Aichholzer\thanks{Institute for Software Technology,
        University of Technology, Graz, Austria, %
        {\tt [oaich|thackl|apilz|bvogt]@ist.tugraz.at}}
		\and
		Thomas Hackl$^\ast$%
		\and
		Alexander Pilz$^\ast$%
		\and
		Pedro~A.~Ramos\thanks{Departamento de Matem\'aticas, Universidad de Alcal\'a,
        Madrid, Spain, {\tt pedro.ramos@uah.es}}
		\and
		Vera~Sacrist\'an\thanks{Departament de Matem\`atica Aplicada II,
		Universitat Polit\`ecnica de Catalunya, Barcelona, Spain, %
		{\tt vera.sacristan@upc.edu}}
		\and
		Birgit Vogtenhuber$^\ast$%
}
\begin{document}
\maketitle

\begin{abstract}
A good drawing of a simple graph is a drawing on the sphere or, equivalently, in the plane in which vertices are drawn as distinct points,
edges are drawn as Jordan arcs connecting their end vertices,
and any pair of edges intersects at most once.
In any good drawing, the edges of three pairwise connected vertices form a Jordan curve which we call a triangle.
We say that a triangle is empty if one of the two connected components it induces 
does not contain any of the remaining vertices of the drawing of the graph.
We show that the number of empty triangles in any good drawing of the complete graph $K_n$ with $n$ vertices is at least $n$.
\end{abstract}

\section{Introduction} %
Consider a simple graph $G=(V,E)$.
A good drawing $D(G)$ of~$G$ on the sphere $S^2$ or, equivalently, in the Euclidean plane $E^2$ is a drawing with the following properties:
\begin{enumerate}
\item The vertices are drawn as distinct points on the sphere $S^2$ (or in the Euclidean plane $E^2$).
\item The edges are Jordan arcs\footnote{Jordan arcs are non-self-intersecting continuous curves containing their end points.} which have the drawings of the vertices they connect as end points.
\item Edges do not pass through any drawn vertex except for their own end vertices.
\item Any pair of edges intersects in at most one point (either in the interior of both edges, forming a proper crossing; or at a common end point).
\end{enumerate}

Besides being a reasonable restriction for a natural drawing of a graph, a main interest in good drawings comes from the fact that they are useful for minimizing the number of crossings:
It is well-known that if in a drawing of a graph there are edges which have self-intersections or pairs of edges which cross more than once, then the graph can be redrawn with less crossings.
Therefore, only good drawings need to be considered when the goal is to make drawings with few crossings or to minimize the number of crossings. See for example~\cite{RiTh97,PanR07} for results on this topic.

In a good drawing $D(G)$ of a graph $G$, the edges of any three pairwise connected vertices in~$D(G)$ form a Jordan curve\footnote{Jordan curves are continuous non-self-intersecting curves that are closed in the sense that the two ``end points'' are identical.}, which we call a \emph{triangle}. This definition matches the usual definition for the special case of straight-line drawings of~$G$, i.e., drawings in the plane where edges are straight-line segments.
Any triangle, being a Jordan curve, partitions the sphere (or the plane) into two connected components.
If, in~$D(G)$, one of these components does not contain the drawing of any of the remaining vertices, then the triangle is called \emph{empty}.
Further, for the case of a good drawing $D(G)$ in the plane, one of the connected components induced by a triangle is bounded while the other one is unbounded.
We denote the former as interior and the latter as exterior of the triangle.
If, in~$D(G)$, no vertex of~$G$ is drawn in the interior of a triangle, then we denote the triangle as \emph{interior-empty}.
Likewise, if, in~$D(G)$, no vertex of~$G$ is drawn in the exterior of a triangle, then we denote the triangle as \emph{exterior-empty}.%

In this work, we consider the number of empty triangles in good drawings $D(K_n)$ of the complete graph $K_n$ with $n$ vertices.
The question of finding empty triangles in good drawings of the complete graph goes back to Erd\H{o}s' question~\cite{Er78} about the existence
of convex $k$-holes (empty polygons spanned by $k$ vertices and edges) in straight-line drawings of the complete graph $K_n$ and the subsequently posed question about their number~\cite{KaMe88}. %
For the existence question, it is by now well-known that every sufficiently large point set contains empty convex  triangles, quadrilaterals, pentagons~\cite{Ha78}, and also hexagons~\cite{nic,gerk}, but that there exist arbitrarily large point sets without empty convex heptagons~\cite{Ho83}.
While the existence question is trivial for empty triangles, the question on the least number $h_3(n)$ of empty triangles in straight-line drawings of~$K_n$ has attracted many researchers and has been the topic of a large number of publications. The currently best known bounds for $h_3(n)$ are $n^2 - \frac{32}{7}n + \frac{22}{7} \leq h_3(n) \leq 1.6196n^2 + o(n^2)$, where the upper bound is due to B{\'a}r{\'a}ny and Valtr~\cite{BV2004} and the lower bound can be found in~\cite{afhhpv-lbnsc-12-cccg}. Note that both the upper and the lower bound are quadratic in~$n$.

In contrast, for general good drawings, Harborth~\cite{h-etdcg-98} showed in 1989 that it is possible to draw $K_n$ such that it contains only $2n-4$ empty triangles.
Note that this implies that most edges are not incident to any empty triangle, while in straight-line drawings, every edge is incident to at least one empty triangle.
Harborth mentioned in the same work that for $3 \leq n \leq 6$, the number of empty triangles in any good drawing $D(K_n)$ is at least $2n-4$.
For $n \geq 7$, the best general lower bound he could show was~$2$.
However, Harborth conjectured that every vertex in any drawing $D(K_n)$ is incident to at least two empty triangles.
Recently, Fulek and Ruiz-Vargas~\cite{fr-tgetdm-13} proved Harborth's conjecture to be true, thus providing a lower bound of~$\frac{2n}{3}$ for the number of empty triangles in any good drawing $D(K_n)$.
In this paper we improve that bound and show that the number of empty triangles in any such drawing is at least $n$.
Further, for $n \leq 8$, we show that Harborth's upper bound of~$2n-4$ is still tight and we conjecture this to be the case in general.

\paragraph{Outline.} Before proving our main theorem in Section~\ref{sec:main}, we review Ruiz-Vargas' proof in Section~\ref{sec:star} and show that it allows to obtain additional properties of the considered empty triangles.
Further, we investigate the relation between rotation schemes and the task of computing the minimum number of empty triangles in Section~\ref{sec:small} and present results for graphs with few vertices.
In Section~\ref{sec:conclusion}, we conclude by giving a short account on our conjecture that every good drawing contains at least $2n-4$ empty triangles.

Note that for many purposes, including counting empty triangles, drawings on the sphere $S^2$ are equivalent to drawings in the plane $E^2$ by Riemann stereographic projection\footnote{Riemann stereographic projection is a projection from the plane to a tangent sphere (or back) where the projection center lies on the sphere and opposite to the tangent point of the plane.}:
In any good drawing $D(K_n)$ of the complete graph $K_n$, let a \emph{cell (of~$D(K_n)$)} be an open region (of~$S^2$ or $E^2$, respectively) whose boundary is defined by (parts of) drawn edges of~$K_n$ and which does not contain any part of~$D(K_n)$ (i.e., no part of a drawn edge or vertex of~$K_n$).
Then for drawings in~$E^2$ exactly one cell is unbounded, while for drawings on $S^2$ all cells are bounded.
Now consider a drawing $D(K_n)$ on $S^2$ and an arbitrary cell $C$ of~$D(K_n)$.
Applying Riemann stereographic projection with the projection center in~$C$, one obtains a drawing $D'(K_n)$ in~$E^2$ where the unbounded cell is the projection of~$C$. Note that for every triangle~$\Delta$ in~$D(K_n)$, $C$ is completely contained in one of the two connected components of~$S^2$ induced by $\Delta$. Further, note that the projection does not change any crossing properties of the edges.
Thus, all vertices of~$K_n$ which are drawn in the connected component of~$S^2$ induced by $\Delta$ that contains $C$ lie in the exterior of the projection $\Delta'$ of~$\Delta$, while all  vertices of~$K_n$ which are drawn in the other connected component of~$S^2$ induced by $\Delta$ lie in the interior of~$\Delta'$. Particularly, $\Delta$ is empty if and only if $\Delta'$ is (interior- or exterior-)empty.
While both models are equivalent in that sense, in some parts of our reasoning it will be more convenient to consider the drawings in the
plane rather than on the sphere. Especially, all the drawings in all figures are assumed to be in the plane.

\section{Empty star triangles} %
\label{sec:star}

Recall that in a good drawing $D(G)$ of a graph $G$, the edges incident to a vertex $v$ do only intersect in~$D(v)$  (where $D(v)$ is the drawing of~$v$).
Thus, the (drawing of the) graph consisting of all vertices of~$G$ and all edges incident to a vertex $v$ of~$G$ is always crossing-free.
We denote this graph as the \emph{induced star graph (of~$v$ in~$D(G)$)}.
Note that $D(G)$ induces a circular order of the edges incident to $v$; see~\figref{fig:star_gen}.

\begin{figure}[htb]
\centering
\includegraphics[page=4]{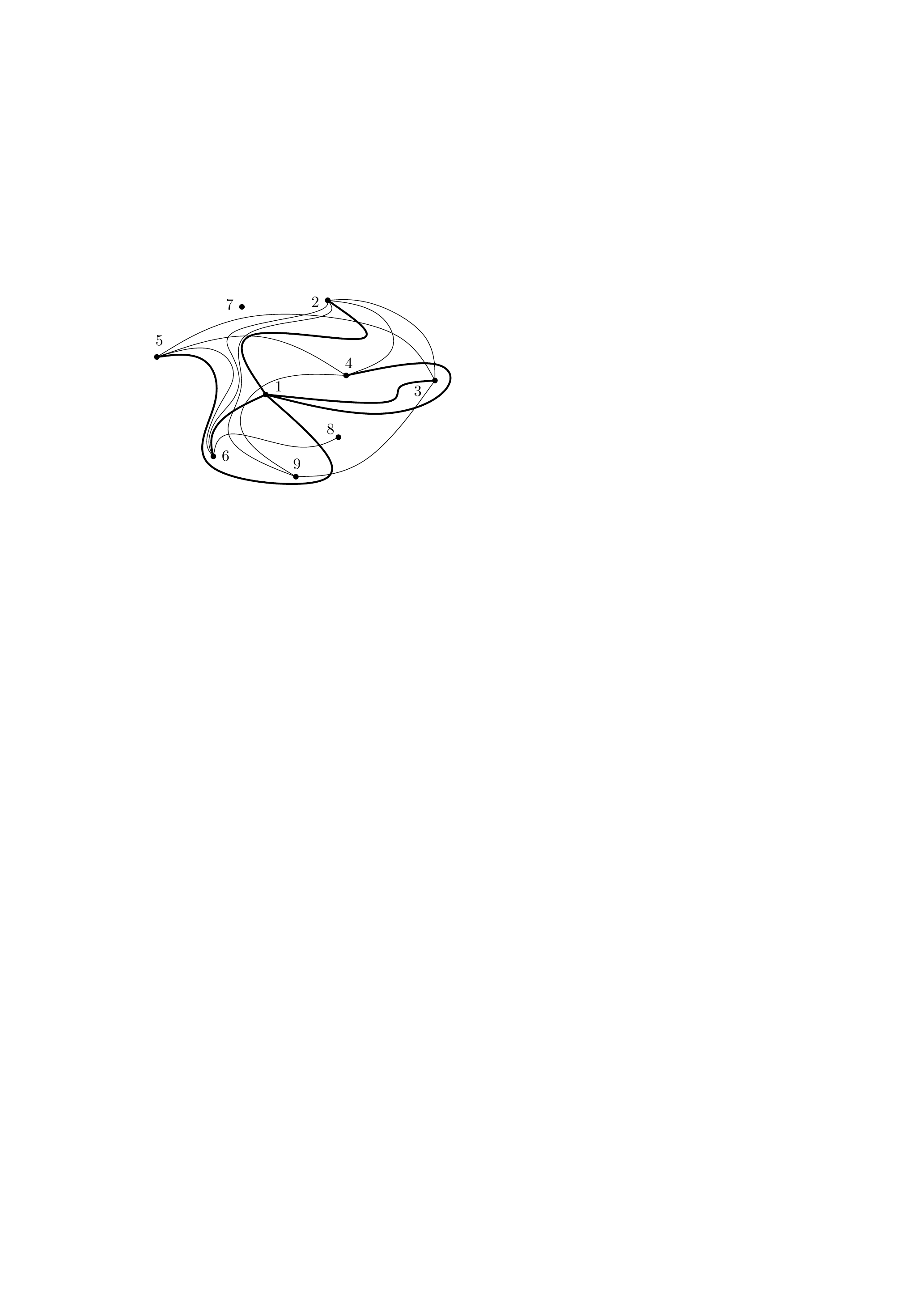}
\caption{\label{fig:star_gen}
	The induced star graph of vertex $v_1$ (drawn bold) in a good drawing of a graph.
	The vertices that are incident to $v_1$ are labeled with respect to their circular order around $v_1$ in~$D(G)$.
}
\end{figure}

If for a triangle $\Delta = D(uvw)$, the (drawing $D(uv)$ of the) edge $uw$ is not crossed by any (drawing of an) edge incident to $v$ in~$D(G)$, then we say that $\Delta$ is a \emph{star triangle (at $v$ in~$D(G)$)}.
In the drawing in~\figref{fig:star_gen}, $D(v_1v_2v_4)$ is a star triangle at $v_1$.
For comparison, $D(v_1v_2v_3)$ is not a star triangle at $v_1$, as the edge $v_2v_3$ crosses the edge $v_1v_4$ in~$D(G)$.
$D(v_1v_5v_6)$ and $D(v_1v_6v_2)$ are other star triangles at $v_1$.
As can be seen in~\figref{fig:star_gen}, the induced star graph of a vertex in a general graph might have isolated vertices.
In contrast, the star graph of a vertex in the complete graph $K_n$ always contains $n-1$ edges and connects all vertices of~$K_n$, see~\figref{fig:star_comp}.

\begin{figure}[htb]
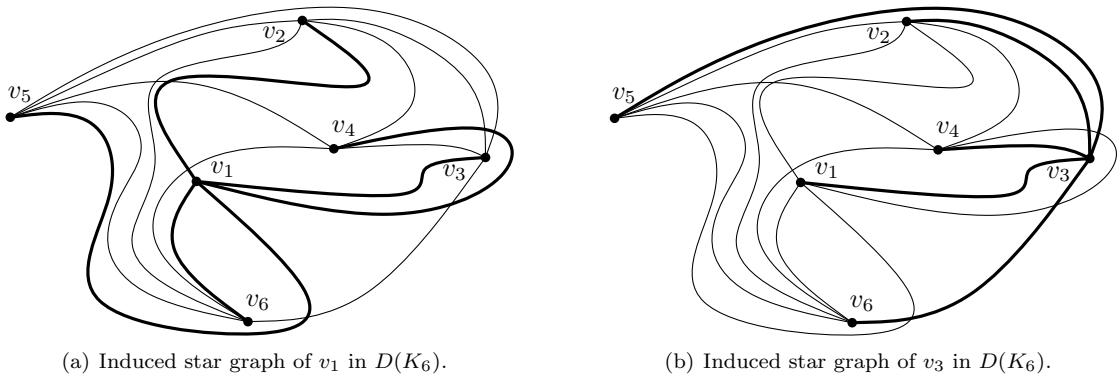

\centering
\subfigure[\label{fig:star_comp1}Induced star graph of~$v_1$ in~$D(K_6)$.
]{\includegraphics[page=5]{star}} \hspace{1cm}
\subfigure[\label{fig:star_comp3}Induced star graph of~$v_3$ in~$D(K_6)$.
]{\includegraphics[page=6]{star}}
\caption{\label{fig:star_comp}
	Examples of induced star graphs (drawn bold) and star triangles in a good drawing $D(K_6)$ of the complete graph $K_6$.
	Vertices are labeled with respect to their circular order around $v_1$.
	$D(v_1v_2v_4)$ is a star triangle at $v_1$.
	$D(v_1v_2v_3)$ is not a star triangle at $v_1$ (as the edge $v_2v_3$ crosses the edge $v_1v_4$ in~$D(G)$), but it is a star triangle at $v_3$.
	$D(v_1v_5v_6)$ is an interior-empty star triangle at $v_1$.
	$D(v_1v_3v_5)$ is an exterior-empty triangle. Further, it is a star triangle at $v_5$, but not at $v_1$ or $v_3$.
}
\end{figure}

This property can be used to obtain the following proposition about star triangles in good drawings of complete graphs.

\begin{prop}
Consider a good drawing $D(K_n)$ of the complete graph $K_n$ with $n \geq 3$ and let $\Delta = D(uvw)$ be a star triangle at $v$.
Then $\Delta$ is empty if and only if $u$ and $w$ are adjacent in the circular order of the edges around $v$ (in~$D(G)$).
\end{prop}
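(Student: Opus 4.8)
The plan is to exploit the fact that the induced star graph of $v$ is a crossing-free tree spanning all $n$ vertices, and to realize $\Delta$ as the unique cycle obtained by adding the edge $uw$ to this tree. First I would recall that, since all edges incident to $v$ pairwise meet only in $D(v)$, the star graph is a plane (crossing-free) star, and hence its complement on $S^2$ is a single open face $F$. Because $\Delta$ is a star triangle at $v$, the arc $D(uw)$ is crossed by no edge incident to $v$; in particular it meets the star graph only in its endpoints, so $D(uw) \setminus \{D(u), D(w)\}$ lies entirely in $F$. Adding $D(uw)$ to the star therefore creates exactly one cycle, and that cycle is the closed curve $D(uv) \cup D(vw) \cup D(wu) = \Delta$. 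This identifies $\Delta$ as a Jordan curve whose two sides I now wish to describe in terms of the circular order around $v$.

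The second step is a separation argument for the remaining vertices. Let $x$ be any vertex with $x \notin \{u, v, w\}$. The star edge $D(vx)$ does not cross $D(uw)$ (the star-triangle property) and does not cross $D(vu)$ or $D(vw)$ (the star graph is crossing-free), hence $D(vx)$ is disjoint from the Jordan curve $\Delta$ except at its endpoint $D(v)$. Since $D(vx) \setminus \{D(v)\}$ is connected and avoids $\Delta$, it lies wholly in one of the two components of $S^2 \setminus \Delta$, and therefore $x$ lies in that same component. Which of the two components this is can be read off locally at $v$: near $D(v)$ the two boundary edges $D(vu)$ and $D(vw)$ split a small disk into two angular sectors, one contained in each component, and $D(vx)$ leaves $D(v)$ into the sector dictated by the position of the edge $vx$ in the circular order between $vu$ and $vw$.

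Combining these observations, the proposition follows by a direct case analysis on the circular order, in which the two edges $vu$ and $vw$ split the order around $v$ into two arcs. If $u$ and $w$ are adjacent, one of these arcs contains no further edge, so every other edge $vx$ leaves $D(v)$ into the same angular sector; hence every other vertex lies in a single component of $S^2 \setminus \Delta$, the opposite component is empty, and $\Delta$ is empty. Conversely, if $u$ and $w$ are not adjacent, then both arcs contain at least one edge, so there are vertices entering each of the two sectors, placing a vertex in each component and making $\Delta$ nonempty.

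I expect the main obstacle to be the local-to-global bookkeeping in the second step: namely, justifying rigorously that the angular sector at $D(v)$ into which $D(vx)$ departs determines the global component of $S^2 \setminus \Delta$ containing $x$, and that the cyclic labeling of these sectors matches the circular order of the edges around $v$. This needs care with the orientation of $\Delta$ and with the fact that the whole arc $D(vx)$, not merely its initial direction at $v$, stays on one side; the latter is exactly what the crossing-free and star-triangle hypotheses deliver through the connectedness argument above.
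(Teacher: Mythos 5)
Your proposal is correct and follows essentially the same route as the paper's own proof: both identify the star graph of $v$ plus the edge $uw$ as a crossing-free subdrawing whose unique cycle is $\Delta$, argue that each star edge $D(vx)$ avoids $\Delta$ and hence places $x$ in the component matching the position of $vx$ in the circular order between $vu$ and $vw$ (the paper phrases this as the circular order being $u, V_1, w, V_2$), and conclude that adjacency of $u$ and $w$ in the rotation is equivalent to one side of $\Delta$ being empty. Your treatment of the local-to-global step via angular sectors at $D(v)$ is a more explicit rendering of what the paper asserts directly, but it is the same argument.
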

\begin{proof}
Let $H \subseteq K_n$ be the induced star graph of~$v$ plus the edge $uw$.
As $\Delta$ is a star triangle at $v$, $D(H)$ is crossing-free.
Further, $uvw$ is the only simple cycle in~$H$.
Let $V_1$ and $V_2$ be the subsets of vertices of~$K_n$ which are drawn in the two connected components induced by $\Delta$, respectively. %
As $\Delta$ is a star triangle at $v$, $uw$ is not crossed by any of the edges of the induced star graph of~$v$, where the latter contains an edge between $v$ and each other vertex of~$K_n$.
Thus, all edges from $v$ to vertices of~$V_1$ are drawn completely in one connected component induced by $\Delta$, %
and all edges from $v$ to vertices of~$V_2$ are drawn completely in the other connected component induced by $\Delta$, %
implying that the circular order of the vertices around $v$ is $u,V_1,w,V_2$.
Hence, $u$ and $w$ are adjacent in this order if and only if $V_1 = \emptyset$ or $V_2 = \emptyset$, which is equivalent to $\Delta$ being interior-empty or exterior-empty.
\end{proof}

In~\cite[Proposition~3.1]{fr-tgetdm-13}, Fulek and Ruiz-Vargas show that in a good drawing in the plane, every vertex is incident to at least one interior-empty triangle.
He does so by explicitly finding such a triangle $\Delta$.
We reconsider the proof of this proposition, showing that $\Delta$ is in fact a star triangle at $v$.

\begin{prop}\label{prop:1star}
For every good drawing $D(K_n)$ of the complete graph $K_n$ in the Euclidean plane with $n \geq 4$ vertices and every vertex $v$ of~$K_n$, there exists at least one interior-empty star triangle at $v$ in~$D(K_n)$.
\end{prop}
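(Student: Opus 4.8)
The plan is to reduce the statement to the preceding proposition and then to locate the desired triangle by an extremal (innermost region) argument that exploits planarity. By that proposition, a star triangle $D(uvw)$ at $v$ is empty precisely when $u$ and $w$ are consecutive in the circular order of the star edges around $v$; and since $D(K_n)$ is drawn in the plane, such an empty star triangle is either interior-empty or exterior-empty. Thus it suffices to exhibit a star triangle at $v$ whose \emph{bounded} side contains none of the remaining vertices. First I would fix the induced star graph of $v$, which is crossing-free and whose edges appear around $v$ in a circular order $u_1,\dots,u_{n-1}$. For indices $i,j$ the chord $D(u_iu_j)$ shares an endpoint with each of the two bounding star edges $D(vu_i)$ and $D(vu_j)$ and hence cannot cross them, so whenever $D(u_iu_j)$ is not crossed by any star edge (call such a chord \emph{clean}) the triangle $D(vu_iu_j)$ is a genuine Jordan curve with a well-defined bounded interior. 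The real issue, and the reason the claim is specific to the plane, is to guarantee that the \emph{empty} side is the bounded one rather than the unbounded one.

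To force this, I would argue by minimality over a family of bounded regions that is closed under ``peeling''. For a star edge $D(vu_i)$ and another vertex $w$, follow $D(u_iw)$ from $u_i$ until it first crosses a star edge $D(vu_k)$ (if it does) at a point $x$; then the three arcs $D(vu_i)$, the initial piece of $D(u_iw)$ up to $x$, and the piece of $D(vu_k)$ from $x$ back to $v$ bound a region $R$ whose connecting arc meets no star edge in its interior. If $D(u_iw)$ is already clean, $R$ is the genuine star triangle $D(vu_iw)$. For $n\geq 4$ there are at least three leaves, so at least one edge between two leaves exists and hence the family is non-empty; being determined by finitely many edges and first crossings, it has an inclusion-minimal member $R_0$. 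The claim I would then prove is that $R_0$ is in fact a genuine star triangle at $v$ with empty interior, which by the preceding proposition is exactly an interior-empty star triangle at $v$.

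The heart of the argument, and the step I expect to be the main obstacle, is the descent showing that $R_0$ cannot be improved. Two things could go wrong: the connecting arc of $R_0$ might be a proper initial piece of some $D(u_iw)$ that was cut off at a first crossing, or $R_0$ might be a genuine triangle whose interior still contains a vertex $c$. In the first case I would use the crossed star edge, or a star edge emanating from $v$ into the wedge of $R_0$, together with a clean initial arc, to cut off a strictly smaller region in the family, contradicting minimality. In the second case I would invoke the preceding proposition: the interior vertices of a star triangle form a single block of the circular order around $v$, and each of their star edges $D(vc)$ lies inside $R_0$; taking an interior vertex $c$ next to a corner $a$ and following the clean initial arc of $D(ac)$ until its first crossing, together with arcs of $D(va)$ and a star edge, again yields a region strictly inside $R_0$ and still in the family. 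In either case minimality is violated, so $R_0$ has an uncrossed connecting edge and an empty interior. The delicate points to get right are that each newly produced region is \emph{strictly} contained in $R_0$ and still belongs to the family (its connecting arc is clean up to its endpoint), which is precisely where the crossing-freeness of the induced star graph and the single-block description of interior vertices from the preceding proposition are needed.
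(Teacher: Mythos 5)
Your reduction to Proposition~1 and the innermost-region idea are natural, but the central claim---that an inclusion-minimal member $R_0$ of your family must be a \emph{genuine} star triangle---is false, and your Case-1 descent cannot be carried out. Your family deliberately admits ``truncated'' regions bounded by $D(vu_i)$, a proper initial piece of $D(u_iw)$, and a piece of the crossed star edge $D(vu_k)$. Such a region can be inclusion-minimal while containing \emph{no} vertex at all, and then every tool your descent relies on is absent: since the connecting arc is crossing-free up to $x$, no star edge crosses $\partial R_0$, so emptiness of $R_0$ means no star edge emanates from $v$ into it, and the far part of the crossed star edge (beyond $x$) exits $R_0$ at the crossing. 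Concretely, take $K_4$ with $v$ at the origin, straight star edges to $u_1=(1,0)$, $u_2=(0,1)$, $u_3=(-1,0)$, the edge $u_1u_3$ drawn as the arc $y=\tfrac12\sqrt{1-x^2}$ crossing $D(vu_2)$ once at $x=(0,\tfrac12)$, and the edges $u_1u_2$, $u_2u_3$ drawn outside the unit disk without crossings. The two lens-shaped regions with corners $v,u_1,x$ and $v,u_3,x$ belong to your family, contain no vertices, and are inclusion-minimal: the only other members are the bounded sides of the star triangles $vu_1u_2$ and $vu_2u_3$, each of which \emph{strictly contains} a lens. So minimality selects exactly the wrong objects---the interior-empty star triangles that do exist are the non-minimal members---and no contradiction can be derived from a minimal lens, because it really is minimal.

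The missing ingredient is precisely what the paper imports from Fulek and Ruiz-Vargas (their Corollary~2.3): from any vertex $u$ one can always choose an edge $uw$ that crosses \emph{no} edge of the current crossing-free subdrawing (the star of $v$ plus the edges added so far), i.e.\ a clean edge is always available. With that lemma the descent runs entirely within genuine star triangles, each containing strictly fewer vertices than the last, and the truncated regions never arise. Your proposal needs this fact in both of your cases---in Case~2 you need a clean arc from an interior vertex (following $D(ac)$ only to its first crossing throws you back into the truncated case), and in Case~1 you need some way to convert an empty truncated region into a star triangle---but it is never proved, and it is the actual crux of the proposition. Unless you establish such a clean-edge lemma, or restrict the family to genuine triangles and prove the descent can stay inside it, the argument does not go through.
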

\begin{proof}
Let $H_0$ be the star graph of~$v$, and let $u_0$ be a vertex of~$K_n\backslash\{v\}$; see~\figref{fig:1star_prop} for an accompanying example.
By~\cite[Corollary~2.3]{fr-tgetdm-13}, there is an edge $u_0w_0$, with $w_0 \in K_n\backslash\{u_0,\ v\}$, such that $D(H_0 \union \{u_0w_0\})$ is still a crossing-free drawing.
Consider the triangle $\Delta_0 = D(vu_0w_0)$.
If $\Delta_0$ is interior-empty then it only remains to show that $\Delta_0$ is a star triangle at $v$; see below.
Otherwise, let $H_1 = H_0 \union \{u_0w_0\}$, and let $u_1$ be a vertex of~$K_n$ whose drawing lies in the interior of~$\Delta_0$.
Repeating the argumentation, there is a vertex $w_1 \in K_n\backslash\{u_1,\ v\}$ such that $D(H_1 \union \{u_1w_1\})$ is still crossing-free (and thus, $D(w_1)$ lies in the interior or on the boundary of~$\Delta_0$).
As $\Delta_1 = D(vu_1w_1)$ contains strictly less vertices of~$K_n$ than $\Delta_0$, repeating this process terminates with an interior-empty triangle $\Delta_i=vu_iw_i$ in a crossing-free drawing $D(H_i\union \{u_iw_i\}) \subset D(K_n)$.

\begin{figure}[htb]
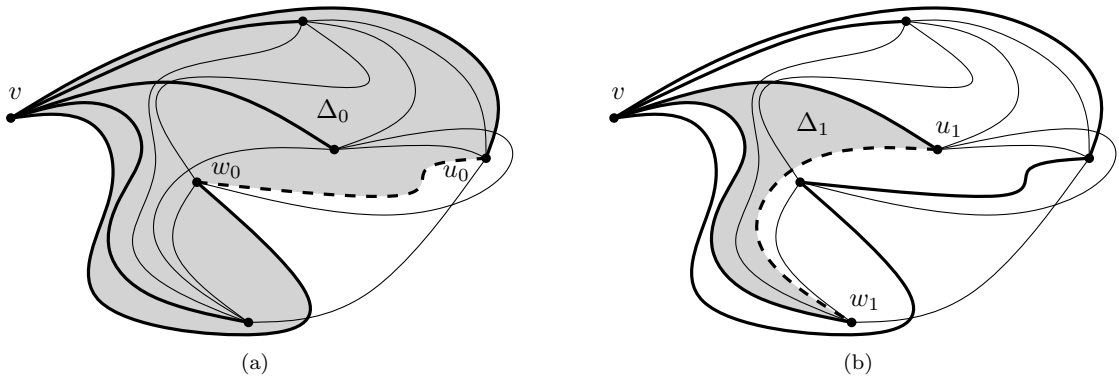

\centering
\subfigure[\label{fig:1star_prop_0}%
]{\includegraphics[page=7]{star}} \hspace{1cm}
\subfigure[\label{fig:1star_prop_1}%
]{\includegraphics[page=8]{star}}
\caption{\label{fig:1star_prop}
	Finding interior-empty star-triangles at $v$: $H_i$ is drawn bold, the edge $u_iw_i$ is drawn dashed, and $\Delta_i$ is drawn shaded.
	(a) First step: $\Delta_0$ is not interior-empty.
	(b) Second step: $\Delta_1$ is interior-empty, so this is also the last step in this example.
}
\end{figure}

Finally consider the interior-empty triangle $\Delta_i$, $i\geq 0$, that has been found by this procedure.
As $D(H_i\union \{u_iw_i\})$ contains the star graph of~$v$ in~$D(K_n)$ and is crossing-free, $\Delta_i$ is a star triangle at $v$.
\end{proof}

Consider a good drawing $D(K_n)$ of the complete graph $K_n$ in the Euclidean plane ($n \geq 4$) and let $v$ be a vertex of~$K_n$.
By Proposition~\ref{prop:1star}, there exists at least one interior-empty star triangle $\Delta$ at $v$.
Let $C$ be a cell of~$D(K_n)$ which lies completely in the interior of~$\Delta$.
From  $D(K_n)$, we obtain a good drawing $D'(K_n)$ in the plane where the projection of~$\Delta$ is an exterior-empty triangle by applying Riemann stereographic projection twice:
First project $D(K_n)$ to the sphere. Then project the result back to the plane with the (new) projection center inside the projection of~$C$, i.e., in~$D'(K_n)$ $C$ is the unbounded cell.
Repeating the above proof to the drawing $D'(K_n)$, we obtain an interior-empty triangle $\Delta'$ which is a star-triangle at $v$ in~$D'(K_n)$.
As the %
projection does not change any crossing properties of the edges, the inverse projection of~$\Delta'$ is a star-triangle of~$v$ in~$D(K_n)$ as well (either interior- or exterior-empty).
Similarly, if we have a good drawing on the sphere, we can first project it to the plane (by this making an arbitrary cell unbounded) and then apply the same arguments as above.
Thus, we altogether obtain the following corollary.

\begin{cor}\label{cor:2star}
For every good drawing $D(K_n)$ of the complete graph $K_n$ with $n \geq 4$ vertices and every vertex $v$ of~$K_n$, there are at least two empty star triangles at $v$ in~$D(K_n)$.
\end{cor}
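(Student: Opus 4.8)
The plan is to bootstrap from Proposition~\ref{prop:1star}, which already furnishes one interior-empty star triangle at~$v$, and to manufacture a \emph{second} empty star triangle by exploiting the equivalence between drawings in the plane and on the sphere. The intuition is that Proposition~\ref{prop:1star} only ever produces interior-empty triangles, but ``interior'' and ``exterior'' are interchangeable under a suitable Riemann stereographic projection; re-running the proposition on a re-projected drawing should therefore hand us a genuinely different triangle.

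First I would assume $D(K_n)$ is drawn in the plane (if it lives on $S^2$, project it to $E^2$ first by placing the projection center in an arbitrary cell, which changes nothing about crossings or emptiness). Applying Proposition~\ref{prop:1star} yields an interior-empty star triangle $\Delta = D(uvw)$ at~$v$. Now I would pick any cell $C$ of~$D(K_n)$ contained in the interior of~$\Delta$ and re-project: send $D(K_n)$ to the sphere and back to the plane with the new projection center inside (the image of)~$C$, obtaining a drawing $D'(K_n)$ in which $C$ is the unbounded cell. Since the projection preserves all crossing relations, $\Delta$ remains a star triangle at~$v$ in~$D'(K_n)$; however, the component of~$\Delta$ that formerly was its (vertex-free) interior now becomes its exterior, so in~$D'(K_n)$ the triangle $\Delta$ is \emph{exterior}-empty. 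Applying Proposition~\ref{prop:1star} a second time, now to~$D'(K_n)$, produces an interior-empty star triangle $\Delta'$ at~$v$; pulling $\Delta'$ back through the inverse projection gives a star triangle at~$v$ in~$D(K_n)$ that is either interior- or exterior-empty, hence empty.

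The one point that requires care --- and the only place the hypothesis $n \geq 4$ is genuinely used --- is arguing that $\Delta'$ is distinct from $\Delta$. Here I would observe that a triangle can be simultaneously interior-empty and exterior-empty only when it bounds no other vertex at all, that is, only when $n = 3$. Since $n \geq 4$, there is always a vertex of~$K_n$ distinct from $u,v,w$, so in~$D'(K_n)$ the exterior-empty triangle $\Delta$ cannot also be interior-empty. As $\Delta'$ \emph{is} interior-empty in~$D'(K_n)$, its vertex set must differ from that of~$\Delta$, and therefore the two empty star triangles at~$v$ that we obtain in~$D(K_n)$ are distinct. The main obstacle is thus not the construction itself but this distinctness bookkeeping together with correctly tracking how the interior/exterior labels of a fixed triangle swap under the double projection; once that is pinned down, the corollary follows.
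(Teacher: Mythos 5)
Your proof is correct and follows essentially the same route as the paper: apply Proposition~\ref{prop:1star} once, re-project via a double Riemann stereographic projection centered in a cell inside the interior-empty triangle so that it becomes exterior-empty, and apply the proposition again to the re-projected drawing. Your explicit distinctness argument (that for $n \geq 4$ the first triangle cannot be interior-empty in the re-projected drawing, so the second triangle must differ) is a point the paper leaves implicit, and it is a welcome addition.
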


The bounds from Proposition~\ref{prop:1star} and Corollary~\ref{cor:2star} are tight in the sense that there exist drawings of~$K_n$  in the plane where most vertices are incident to exactly one interior-empty and one exterior-empty triangle, or to exactly two interior-empty and no exterior-empty triangles.
See for example \figref{fig:monotone1} for the former and \figref{fig:monotone2} or Harborth's upper bound drawing~\cite[Fig.~1]{h-etdcg-98} for the latter.

\begin{figure}[htb]
\centering
\subfigure[\label{fig:monotone1}%
]{\includegraphics[scale = 0.94, page=4]{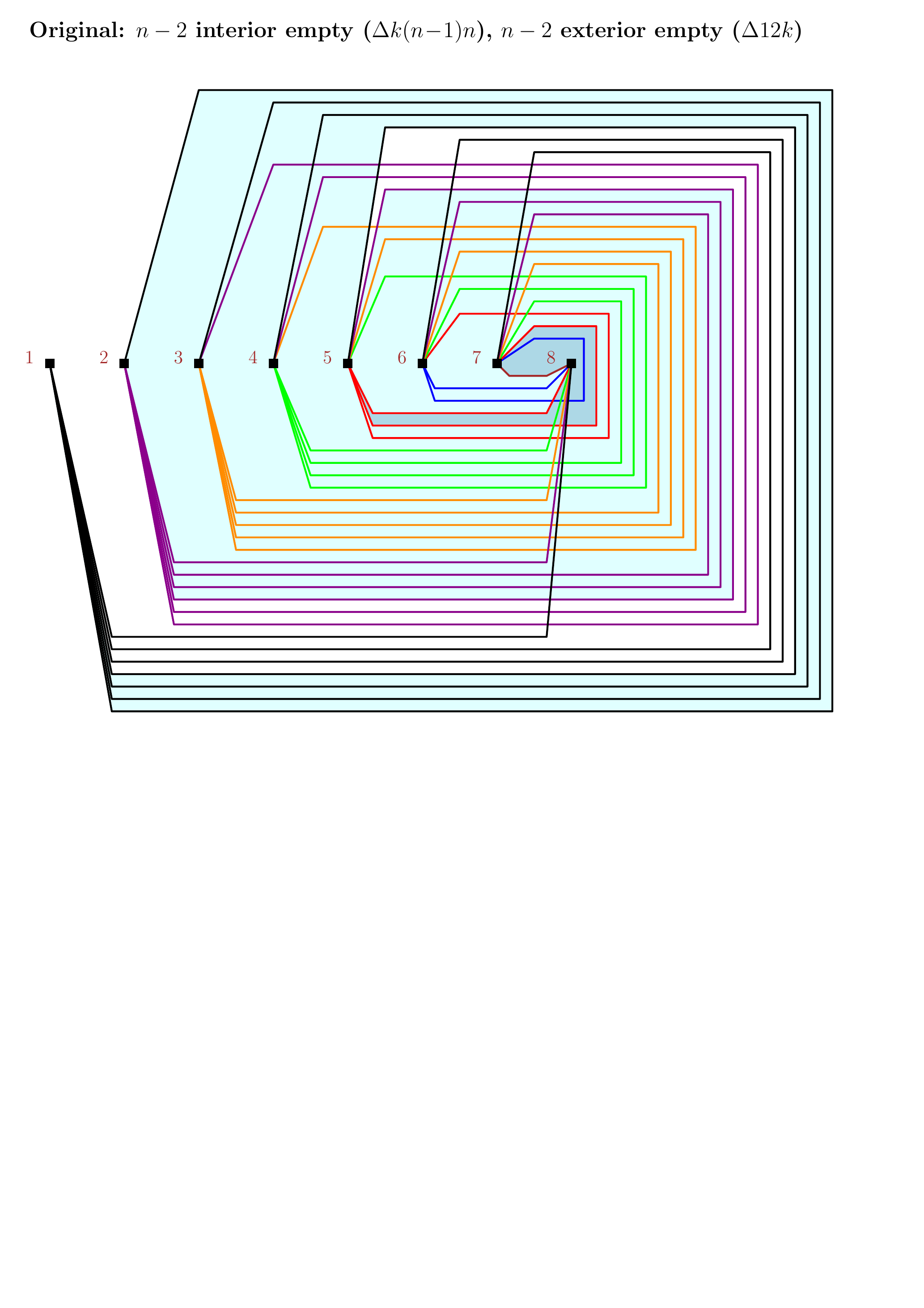}} \hfill
\subfigure[\label{fig:monotone2}%
]{\includegraphics[scale = 0.94, page=5]{monotone_triangles}}
\caption{\label{fig:monotone}
	Good drawings of~$K_8$ with $2n-4 = 12$ empty triangles where
	(a) vertices $v_3, \ldots, v_5$ are incident to exactly one interior-empty and one exterior-empty triangle, and
	(b) vertices $v_3, \ldots, v_5$ are incident to exactly two interior-empty and no exterior-empty triangles.
}
\end{figure}

\section{Lonely and lucky vertices} %
\label{sec:main}
Consider a good drawing $D(K_n)$ of the complete graph $K_n$ and a vertex $v$ in this drawing.
If there exists a triangle $\Delta$ in~$D(K_n)$ for which $v$ is the only vertex drawn in one of the two connected components induced by $\Delta$, then we say that $v$ is \emph{lonely in~$\Delta$}.
For example, in the drawing in~\figref{fig:star_comp}, vertex $v_3$ is lonely in~$D(v_1v_2v_4)$ as it is the only vertex drawn in the interior of~$D(v_1v_2v_4)$. Likewise, vertex $v_5$ is lonely in~$D(v_2v_3v_6)$ as it is the only vertex drawn in the exterior of~$D(v_2v_3v_6)$.

\begin{prop}\label{prop:lonely3}
If a vertex $v$ of~$K_n$, $n\geq 4$, is lonely in a good drawing $D(K_n)$, then $v$ is incident to at least three empty triangles in~$D(K_n)$.
\end{prop}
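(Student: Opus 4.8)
The plan is to reduce to the situation where $v$ is the unique vertex in the bounded interior of its separating triangle, and then to exhibit three explicit empty triangles incident to $v$. First, since $v$ is lonely in some triangle $\Delta = D(abc)$ with $a,b,c \neq v$, I would apply Riemann stereographic projection with projection center chosen in a cell lying in the connected component of $\Delta$ that does \emph{not} contain $v$. As this projection preserves both crossings and emptiness (as noted in the Introduction), the result is a good drawing in the plane in which $v$ is the only vertex in the interior of $\Delta$ while all remaining vertices lie in its exterior; it then suffices to find three empty triangles incident to $v$ in this drawing.

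The crux is to control the sub-drawing induced by $\{v,a,b,c\}$, that is, the good drawing of $K_4$ on these four vertices. Here I would argue that, because $v$ lies in the interior of $\Delta = D(abc)$, this $K_4$ sub-drawing is necessarily crossing-free. The edges $va$, $vb$, $vc$ cannot cross the two sides of $\Delta$ adjacent to their common endpoint (they share a vertex), so the only candidate crossings are $va \times bc$, $vb \times ca$, and $vc \times ab$; and one checks that whenever any one of these crossings is present, no vertex of $K_4$ can lie in the interior of the triangle spanned by the other three, contradicting $v \in \operatorname{int}(\Delta)$. Consequently $va$, $vb$, $vc$ lie entirely in the closed interior of $\Delta$, and as they are pairwise non-crossing arcs emanating from $v$, they partition $\operatorname{int}(\Delta)$ into three regions bounded respectively by $\{va,vb,ab\}$, $\{vb,vc,bc\}$, and $\{vc,va,ca\}$, i.e.\ into the bounded interiors of the triangles $D(vab)$, $D(vbc)$, and $D(vca)$.

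It then remains to observe that each of these three triangles is empty. Consider $D(vab)$: its bounded interior is one of the three regions just described and hence lies inside $\Delta$. Since $v$ is the only vertex drawn in $\operatorname{int}(\Delta)$ and $v$ is merely a \emph{corner} of $D(vab)$ (while $c$ and every remaining vertex lie outside this region), the interior of $D(vab)$ contains no vertex of $K_n$, so $D(vab)$ is interior-empty; the identical argument applies to $D(vbc)$ and $D(vca)$. As these three triangles have pairwise distinct vertex sets and are all incident to $v$, they furnish the required three empty triangles. I expect the main obstacle to be precisely the crux step of paragraph two: cleanly establishing that $v \in \operatorname{int}(\Delta)$ forces the $K_4$ sub-drawing to be crossing-free (equivalently, that the three edges at $v$ remain inside $\Delta$). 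Once the three sub-triangles are correctly set up, both the emptiness and the counting are immediate, and the case $n=4$ (where $\Delta$ itself is exterior-empty and these are exactly the three empty triangles through $v$) serves as a useful sanity check.
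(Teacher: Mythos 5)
Your reduction via stereographic projection is fine, but the ``crux'' step of your second paragraph is false, and it is exactly the case your argument cannot avoid. In a good drawing it is perfectly possible that $v$ lies in the interior of $\Delta = D(abc)$ while the edge $va$ crosses $bc$: the edge $va$ can leave $v$, cross $bc$ once, and then wrap around the outside of $\Delta$ to reach $a$ from the exterior, crossing no other edge of the $K_4$ sub-drawing. This is a legitimate good drawing of $K_4$ (it is the one-crossing drawing of $K_4$ projected so that the unbounded cell lies on the side of $abc$ away from $v$), and it is precisely the configuration the paper depicts in \figref{fig:lonely_crossing_int} and relies on again in its conclusion. So your claim that a crossing $va \times bc$ forces every vertex of the $K_4$ out of the interior of the triangle on the other three vertices does not hold for good (non-straight-line) drawings; it is an artifact of straight-line intuition. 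Consequently the three edges $va, vb, vc$ need not stay inside $\Delta$, and your decomposition of $\operatorname{int}(\Delta)$ into the interiors of $D(vab)$, $D(vbc)$, $D(vca)$ breaks down: in the crossing configuration only $D(vbc)$ (the triangle on the two vertices not involved in the crossing) is guaranteed empty, while $D(vab)$ and $D(vca)$ bound regions that extend outside $\Delta$ and may contain arbitrarily many of the remaining $n-4$ vertices.

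Your argument does correctly handle the crossing-free case, which is the paper's Case~1 (there the paper argues on whichever side of $\Delta$ contains $v$, without needing your projection step). But for the one-crossing case the paper needs a genuinely different idea, and so would you: it invokes Corollary~\ref{cor:2star} to get two empty \emph{star} triangles at $v$, and then observes that the empty triangle $D(vv_2v_3)$ arising from the crossing cannot be a star triangle at $v$ (its side $v_2v_3$ is crossed by the edge $vv_1$, which is incident to $v$), so it is distinct from those two, giving three in total. Without some substitute for this machinery, your proof establishes only one empty triangle incident to $v$ in the crossing case, not three.
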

\begin{proof}
Consider a triangle $\Delta = D(v_1v_2v_3)$ in~$D(K_n)$ which witnesses the loneliness of~$v$, i.e., $v$~is the only vertex in one of the two connected components induced by $\Delta$. %
Further, consider the edges $e_1=vv_1$, $e_2=vv_2$, and $e_3=vv_3$ between $v$ and the three vertices of~$\Delta$.
As $D(K_n)$ is a good drawing, at most one of~$e_1$, $e_2$, and $e_3$ can form a crossing with an edge of~$\Delta$. %
We distinguish two cases.
\mycasestart

\begin{figure}[htb]
\centering
\subfigure[\label{fig:lonely_noncrossing_int}%
]{\includegraphics[page=1]{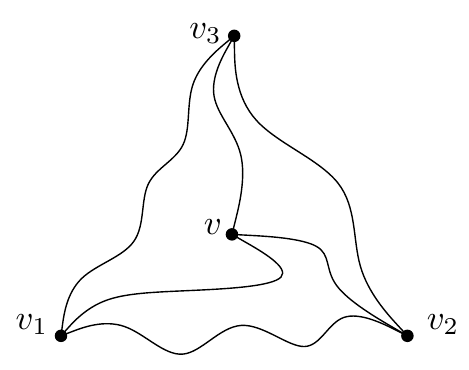}} \hspace{1cm}
\subfigure[\label{fig:lonely_noncrossing_ext}%
]{\includegraphics[page=2]{lonely3}}
\caption{\label{fig:lonely_noncrossing}
	None of the edges between $v$ and $\{v_1,v_2,v_3\}$ crosses an edge of~$\Delta = D(v_1v_2v_3)$:
	(a) $v$ in the interior of~$\Delta$, and
	(b) $v$ in the exterior of~$\Delta$.
}
\end{figure}

\mycase{1: None of~$e_1$, $e_2$, and $e_3$ forms a crossing with an edge of~$\Delta$.}
Note that in this case, $e_1$, $e_2$, $e_3$, and $v$ are all completely in the same connected component induced by $\Delta$. %
Moreover, as none of the other vertices is on this side of~$\Delta$, each of the three triangles formed by $v$ and two vertices of~$\Delta$ is empty; see~\figref{fig:lonely_noncrossing}.

\begin{figure}[htb]
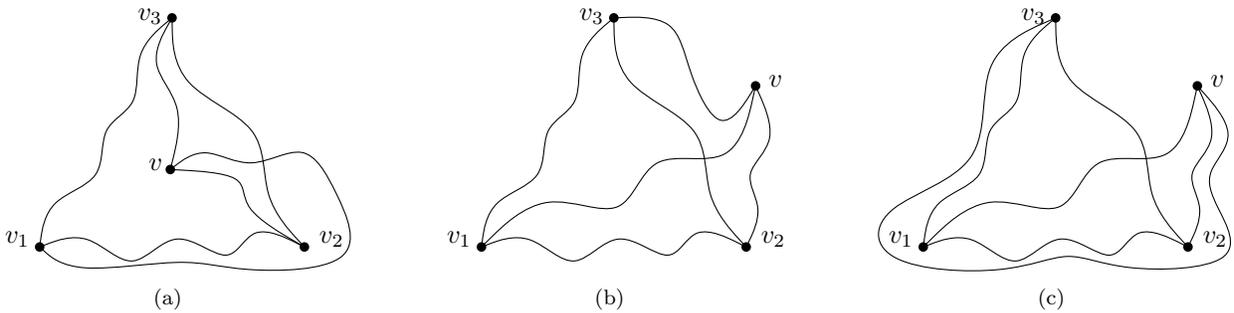

\centering
\subfigure[\label{fig:lonely_crossing_int}%
]{\includegraphics[page=3]{lonely3}} \hfill
\subfigure[\label{fig:lonely_crossing_ext1}%
]{\includegraphics[page=4]{lonely3}} \hfill
\subfigure[\label{fig:lonely_crossing_ext2}%
]{\includegraphics[page=5]{lonely3}}
\caption{\label{fig:lonely_crossing}
	One of the edges between $v$ and $\{v_1,v_2,v_3\}$, w.l.o.g.\! $vv_1$, crosses an edge of~$\Delta = D(v_1v_2v_3)$:
	(a)~$v$~in the interior of~$\Delta$;
	(b)~$v$~in the exterior of~$\Delta$, $vv_3$ drawn such that $D(vv_2v_3)$ is interior-empty; and
	(c)~$v$~in the exterior of~$\Delta$, $vv_3$ drawn such that $D(vv_2v_3)$ is exterior-empty.
}
\end{figure}

\mycase{2: One of the edges $e_1$, $e_2$, or $e_3$ forms a crossing with an edge of~$\Delta$.}
W.l.o.g., let $e_1=vv_1$ be this edge.
Then the crossed edge of~$\Delta$ is $v_2v_3$, and $\Delta'=D(vv_2v_3)$ is an empty triangle; see~\figref{fig:lonely_crossing}.
As the edge $v_2v_3$ is crossed by $vv_1$, $\Delta'$ is none of the empty star triangles at $v$ which are encountered by the proof of Proposition~\ref{prop:1star} and by Corollary~\ref{cor:2star}.
Thus, together with these two empty star triangles, $v$ is incident to at least three different empty triangles.
\mycaseend
\end{proof}

In the following, let $l(v)$ be the number of triangles in~$D(K_n)$ in which $v$ is lonely, and $t(v)$ be the number of empty triangles in~$D(K_n)$ incident to $v$.
If $t(v)-l(v) \geq 2$ then we say that $v$ is \emph{lucky}.
Note that for $n \geq 4$, every vertex $v$ that is not lonely is lucky, as we know by Corollary~\ref{cor:2star} that in this case $t(v) \geq 2$. Also, every vertex $v$ with $l(v)=1$ is lucky, as in this case $t(v) \geq 3$ by Proposition~\ref{prop:lonely3}.

\begin{theorem}\label{thm:n}
For $n \geq 4$, the number of empty triangles in any good drawing $D(K_n)$ of the complete graph $K_n$ with $n$ vertices is at least $n$.
\end{theorem}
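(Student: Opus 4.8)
The plan is to count incidences between vertices and empty triangles. Write $T$ for the number of empty triangles and $t(v)$ for the number of empty triangles incident to a vertex $v$. Since every triangle has exactly three vertices, each empty triangle is counted once for each of its three vertices, giving the identity $3T=\sum_{v}t(v)$. Thus the theorem is equivalent to $\sum_{v}t(v)\ge 3n$, that is, to the statement that on \emph{average} a vertex lies on at least three empty triangles. Since Corollary~\ref{cor:2star} already gives $t(v)\ge 2$ for every $v$, it is convenient to work with the surplus $s(v):=t(v)-2\ge 0$; the goal then becomes $\sum_{v}s(v)\ge n$.

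First I would extract all the surplus that the earlier results hand us for free. Every vertex $v$ with $l(v)\ge 1$ is lonely and hence satisfies $t(v)\ge 3$ by Proposition~\ref{prop:lonely3}, so $s(v)\ge 1$. I would upgrade this to the full luckiness statement $t(v)\ge l(v)+2$ for every $v$, by induction on $l(v)$: given $l(v)$ triangles witnessing the loneliness of $v$, the case distinction in the proof of Proposition~\ref{prop:lonely3} attaches to each witness an empty triangle through $v$ on the prescribed side, while Corollary~\ref{cor:2star} always supplies two empty \emph{star} triangles on top. The delicate point here is \textbf{distinctness}: one must check that the triangles produced for different witnesses, together with the two star triangles, are pairwise distinct. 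I expect this to follow from the fact that for $n\ge 5$ a triangle witnessing loneliness of $v$ is itself never empty, and that its three spokes out of $v$ determine the attached empty triangle unambiguously. Granting luckiness, $s(v)\ge l(v)$, so every lonely vertex already contributes its full quota.

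It then remains to handle the vertices that contribute nothing, namely the \emph{critical} ones, where I call $v$ critical if it is not lonely and $t(v)=2$ (equivalently $s(v)=0$). If no vertex is critical, then $s(v)\ge 1$ for all $v$ and $\sum_v s(v)\ge n$ is immediate, so the entire difficulty is concentrated here; such vertices really do occur, for instance $v_3,\dots,v_5$ in the extremal drawings of \figref{fig:monotone}. For a critical vertex $v$ the two empty triangles at $v$ are exactly the two empty star triangles, and by the characterisation of empty star triangles proved above these correspond to two pairs that are consecutive in the rotation of edges around $v$. Since $t(v)=2$, for each of the remaining $n-3$ consecutive pairs $(u_i,u_{i+1})$ the chord $u_iu_{i+1}$ must be crossed by some spoke $vx_i$; otherwise $D(vu_iu_{i+1})$ would be a star triangle at $v$ and hence, being consecutive, a further empty triangle. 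A critical vertex therefore forces a large number of crossings in its immediate neighbourhood.

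The hard part will be to convert these forced crossings into surplus elsewhere. The plan is a discharging argument: start each vertex with charge $s(v)$, so the total charge is $\sum_v s(v)=3T-2n$, and then use the forced-crossing structure around each critical vertex to route one unit of charge to it from a vertex carrying surplus, by exhibiting, for each critical $v$, an otherwise unaccounted empty triangle incident to one of the vertices $x_i,u_i,u_{i+1}$ of a forced crossing. The crux is to make this routing \textbf{injective}, so that distinct critical vertices never draw on the same unit of surplus; I expect this to be the main obstacle, and to require a careful local analysis of how the spokes of $v$ interleave with the chords around it. Once the routing is established, every vertex ends with charge at least $1$, whence $\sum_v s(v)\ge n$, that is $3T\ge 3n$ and $T\ge n$, as claimed.
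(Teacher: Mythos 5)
Your proposal has two genuine gaps, and one of them is fatal to a lemma you build on. The ``full luckiness'' statement $t(v)\ge l(v)+2$ that you propose to prove by induction on $l(v)$ is false: the paper exhibits in \figref{fig:nolucky} a good drawing of $K_8$ in which every vertex is at least 2-lonely and no vertex is lucky, i.e., $t(v)-l(v)\le 1$ for \emph{every} vertex. So the distinctness you ``expect'' to hold genuinely fails---different loneliness witnesses can force the very same attached empty triangle (e.g., two witnesses sharing the crossed edge $v_2v_3$ both attach $D(vv_2v_3)$), and no refinement of the attachment rule can rescue the claim, since its conclusion is simply wrong. (This lemma is not strictly needed for your final plan, since Proposition~\ref{prop:lonely3} already gives every lonely vertex surplus $s(v)\ge 1$; but as written it is presented as a step of the proof and it cannot be carried out.)

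The second, decisive gap is that the actual engine of your argument---the injective discharging that routes one unit of surplus to each critical vertex---is never carried out; you explicitly leave the injectivity of the routing as an obstacle you ``expect'' to be the main difficulty. In your framework that step \emph{is} the theorem, so as it stands there is no proof. The paper sidesteps this difficulty entirely by inducting on $n$ rather than redistributing charge within a fixed drawing. Note that a critical vertex in your sense (non-lonely with $t(v)=2$) is in particular lucky, since $l(v)=0$ gives $t(v)-l(v)=2$. Removing a lucky vertex $v$ destroys the $t(v)$ empty triangles incident to it and creates only $l(v)$ new ones (the loneliness witnesses become empty), so by induction $D(K_n)$ has at least $(n-1)+t(v)-l(v)\ge n+1$ empty triangles. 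If no vertex is lucky, then every vertex is lonely (non-lonely implies lucky by Corollary~\ref{cor:2star}), so $t(v)\ge 3$ for all $v$ by Proposition~\ref{prop:lonely3}, and your own double-counting identity $3T=\sum_v t(v)$ finishes immediately. In other words, the vertices that concentrate all the difficulty in your approach are exactly the ones the paper's induction disposes of for free; replacing your discharging plan by this deletion argument would turn your sketch into a correct proof.
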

\begin{proof}
We prove the bound by induction on the number $n$ of vertices.
For the induction base, it is straightforward that for $n=4$, every good drawing contains exactly
four empty triangles; see again \figrefs{fig:lonely_noncrossing}{fig:lonely_crossing}.
So assume that the statement is true for any good drawing $D(K_{n'})$ with $n' < n$, and consider a good drawing of~$D(K_{n})$.
We distinguish two cases.

\mycasestart
\mycase{1: $D(K_n)$ contains a vertex $v$ which is lucky.}
As $v$ is lucky, we know that $t(v)-l(v) \geq 2$.
Removing $v$ and all its incident edges results in a drawing $D(K_{n-1})$.
By the induction hypothesis, this drawing contains at least $n-1$ empty triangles.
When adding $v$ and all its incident edges again, the number of empty triangles is increased by $t(v)$ and decreased by $l(v)$.
Thus, $D(K_n)$ contains at least $n-1+t(v)-l(v) \geq n+1 > n$ empty triangles.

\mycase{2: All vertices of~$K_n$ are lonely in~$D(K_n)$.}
By Proposition~\ref{prop:lonely3}, every lonely vertex is incident to $t(v) \geq 3$ empty triangles.
Summing up the number of incident empty triangles per vertex over all vertices,
every triangle is counted exactly thrice (once for each of its vertices).
Thus, $\frac{1}{3} \cdot \sum_{v \in K_n} t(v) \geq \frac{1}{3} \cdot \sum_{v \in K_n} 3 = n$
is a lower bound for the number of empty triangles in~$D(K_n)$.
\end{proof}

\section{Rotation schemes and graphs with few vertices} %
\label{sec:small}
While the emptiness of a triangle clearly depends on the drawing, not all information of the good drawing is needed to decide whether a triangle $v_1 v_2 v_3$ is empty.
While for deciding interior- or exterior-emptiness we need to know which side of the triangle contains the unbounded face, we can decide whether a triangle is empty by only looking at the rotation scheme of the drawing.
Given a drawing of a graph $G$ on an oriented surface, the \emph{rotation scheme} of the drawing of~$G$ gives the circular order of the edges around each vertex of~$G$.
Let $v_1 v_2 v_3$ be a triangle in a good drawing.
The rotation scheme of~$v_2$ is separated by the edges $v_2v_1$ and $v_2v_3$ into two disjoint (possibly empty) sequences.
For any fixed direction of the circular order around $v_2$, let $R_2$ be the sequence between the edges $v_2v_1$ and $v_2v_3$, and let $L_2$ be the sequence between the edges $v_2v_3$ and $v_2v_1$.
For $v_1$ and $v_3$, we define $R_1$, $L_1$ as well as $R_3$ and $L_3$ analogously.
We call a sequence $R_i$ a \emph{right} sequence and a sequence $L_i$ a \emph{left} sequence, for $1\leq i \leq 3$.
In any rotation scheme (and any good drawing) of the complete graph, edges from $v_1$, $v_2$, and $v_3$ to any vertex $v$ are trivially contained either in at least two left sequences or at least two right sequences; we then say that $v$ is \emph{left} of~$v_1 v_2 v_3$ or \emph{right} of~$v_1 v_2 v_3$, respectively.
In a good drawing, the triangle is empty if either all other vertices are left of~$v_1 v_2 v_3$ or all other vertices are right of~$v_1 v_2 v_3$; see again \figrefs{fig:lonely_noncrossing}{fig:lonely_crossing}.

Exhaustively generating all possible rotation schemes of~$K_n$ for small $n$ and counting the number of empty triangles therein can therefore easily be done.
It remains to verify whether a rotation scheme is actually realizable, i.e., is the rotation scheme of at least one good drawing.
Deciding realizability can be done in a combinatorial way by considering a drawing as a crossing-free graph where
\begin{inparaenum}[(i)]
\item each vertex is either a vertex of the original graph or a crossing of the original graph,
\item each edge is a part of an edge of the original graph, and
\item each face is a cell of the original graph.
\end{inparaenum}
For small point sets, a simple backtracking procedure that subsequently adds edges of the original graph and checks whether the drawing is good is sufficient and can be implemented in a straight-forward way.
Note that Kyn\v{c}l~\cite{kyncl} gives a more sophisticated, polynomial-time algorithm to decide realizability of a given rotation scheme of the complete graph.
For $3 \leq n \leq 6$, Harborth mentioned in~\cite{h-etdcg-98} that the number of empty triangles in any good
drawing $D(K_n)$ is at least $2n-4$.
By extensive computer search, we have been able to confirm this result and show the same to be true also for $n = 7$ and $n = 8$.

\begin{obs}
For $3 \leq n \leq 8$, the number of empty triangles in a good drawing of~$K_n$ is at least $2n-4$.
\end{obs}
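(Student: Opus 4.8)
The plan is to turn the statement into a finite combinatorial verification and settle it by an exhaustive computation for each fixed $n$ with $3 \leq n \leq 8$. The enabling observation, established above, is that the emptiness of a triangle $v_1v_2v_3$ is determined by the rotation scheme of the drawing alone: the triangle is empty exactly when every remaining vertex lies left of $v_1v_2v_3$, or every remaining vertex lies right of it, a monochromaticity condition that is read off from the left and right sequences $L_i$, $R_i$ via the two-out-of-three majority rule. Hence the number of empty triangles is a function of the rotation scheme, and it suffices to range over rotation schemes of $K_n$ rather than over all good drawings.

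Concretely, for each $n$ I would (i) generate the realizable rotation schemes of $K_n$, (ii) for each one compute its empty-triangle count by iterating over all $\binom{n}{3}$ triples, classifying each of the remaining $n-3$ vertices as left or right and testing whether the triple is monochromatic, and (iii) take the minimum of these counts. Generating only the \emph{realizable} schemes is essential, since an abstract assignment of cyclic orders need not arise from any drawing; realizability is checked with the backtracking procedure described above, which views a good drawing as a crossing-free planar graph on the original vertices and the edge crossings and inserts the edges of $K_n$ one at a time, verifying at each step that the partial drawing remains good. Kyn\v{c}l's polynomial-time realizability test may be used as an alternative. Confirming that the resulting minimum is at least $2n-4$ for every $n \leq 8$ establishes the bound; Harborth's construction shows that this value is in fact attained.

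The main obstacle is the size of the search space. The number of abstract rotation schemes of $K_n$ is of order $((n-2)!)^n$, which is already out of reach for $n=8$ even after quotienting by the vertex relabelings and by the global orientation reversal that swaps all left and right sequences (and hence preserves emptiness counts). The practical remedy is to avoid enumerating all schemes and filtering afterwards, and instead to generate the realizable ones directly: one builds good drawings incrementally with the backtracking construction, fixing an initial sub-$K_4$ up to symmetry and pruning any partial drawing as soon as it fails to be good or is already forced to carry at least $2n-4$ empty triangles. Interleaving realizability testing with empty-triangle counting in this way, rather than performing them as separate passes, is what makes the cases $n=7$ and $n=8$ computationally feasible.
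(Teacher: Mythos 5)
Your proposal takes essentially the same route as the paper: the authors likewise reduce emptiness to a condition on the rotation scheme (the left/right majority rule), enumerate rotation schemes of $K_n$ for $n \leq 8$, check realizability with exactly the backtracking procedure you describe (or Kyn\v{c}l's algorithm), and verify by exhaustive computer search that the minimum count is $2n-4$, citing Harborth for $3 \leq n \leq 6$. The only caution concerns your pruning heuristic: since adding a vertex can destroy emptiness of existing triangles, a partial drawing's empty-triangle count does not bound the final count, so discarding branches ``forced'' to have at least $2n-4$ empty triangles would need separate justification -- but this is an implementation detail, not part of the core argument.
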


If every drawing with few empty triangles would contain a lucky vertex, then, by the proof of Theorem~\ref{thm:n}, the number of empty triangles would always be at least $2n-4$.
This is the case for the upper bound example from Harborth~\cite[Fig.~1]{h-etdcg-98}, as well as for the drawings shown in \figref{fig:monotone};
there, none of the vertices $v_3, \ldots, v_{n-2}$ is lonely in any triangle, and thus all of them are lucky.
Unfortunately, the drawing in \figref{fig:nolucky} shows that, in general, an argumentation like this one is not possible.

\begin{figure}[htb]
\centering
\includegraphics[page=1]{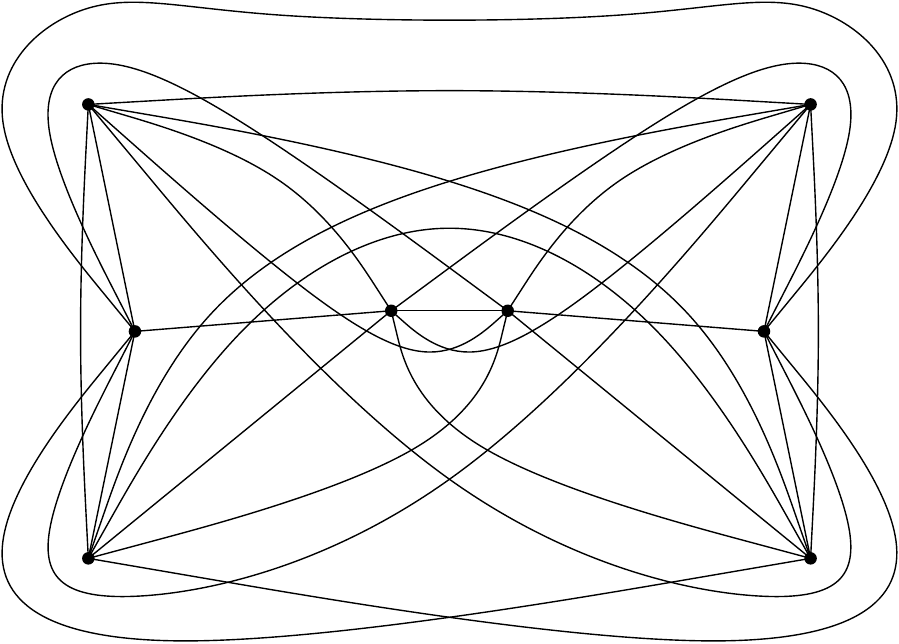}
\caption{\label{fig:nolucky}
	Example of a drawing of the only rotation scheme for $K_8$ where every vertex is at least 2-lonely and no vertex is lucky.
}
\end{figure}

Still, also in this drawing the total number of empty triangles equals $2n-4$.
Moreover,  for $n=8$, the drawing represents the only realizable rotation scheme (out of~$5 370 725$ %
different ones) for which there is no lucky vertex.
All other realizable rotation schemes of cardinality $4 \leq n \leq 8$, no matter whether or not they have few empty triangles, contain at least one lucky vertex.

\section{Conclusion}
\label{sec:conclusion}

In this paper we have shown that any good drawing of the complete
graph $K_n$ with $n$ vertices contains at least $n$ empty triangles,
thus improving the best previous lower bound of~$\frac{2n}{3}$. As
mentioned in the last section, Harborth already stated that the number
of empty triangles for the best known minimizing examples is $2n-4$,
and we have confirmed this for $n$ up to 8. We thus state the
following conjecture.

\begin{conj}
For $n \geq 4$, the number of empty triangles in any good drawing of the complete graph $K_n$ is at least $2n-4$.
\end{conj}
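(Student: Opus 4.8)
The plan is to prove the bound by induction on $n$, re-using the structure of the proof of Theorem~\ref{thm:n} but with the stronger inductive hypothesis that every good drawing of $K_{n-1}$ has at least $2(n-1)-4 = 2n-6$ empty triangles. For the base case one may invoke the Observation (which already settles $3\le n\le 8$) or simply check $n=4$, where every good drawing has exactly four $=2\cdot 4-4$ empty triangles. The decisive observation is that Case~1 of Theorem~\ref{thm:n} is \emph{already} strong enough for the present bound: if $D(K_n)$ contains a lucky vertex $v$, then deleting $v$ gives a drawing $D(K_{n-1})$ with at least $2n-6$ empty triangles by the inductive hypothesis, and re-inserting $v$ changes the count by $t(v)-l(v)\ge 2$, for a total of at least $2n-6+2=2n-4$. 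Hence the lucky case carries no additional cost, and the entire difficulty reduces to drawings in which \emph{no} vertex is lucky.

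Next I would pin down the rigid structure that a no-lucky drawing must have. Because every non-lonely vertex is lucky (Corollary~\ref{cor:2star} gives $t(v)\ge 2$) and every vertex with $l(v)=1$ is lucky (Proposition~\ref{prop:lonely3} gives $t(v)\ge 3$), the absence of lucky vertices forces $l(v)\ge 2$ and $t(v)-l(v)\le 1$ simultaneously for \emph{every} vertex $v$. This is exactly the regime realized by the drawing of \figref{fig:nolucky}, which for $n=8$ is the unique realizable no-lucky rotation scheme and attains precisely $2n-4$ empty triangles. The goal in this case is therefore to show that such drawings are essentially unique and to count their empty triangles directly.

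The heart of the argument, and the step I expect to be genuinely hard, is the no-lucky case. The tools developed so far are provably insufficient on their own: the per-vertex guarantee $t(v)\ge 3$ of Proposition~\ref{prop:lonely3} yields only $T=\frac13\sum_v t(v)\ge n$, whereas a tight drawing needs $\sum_v t(v)=3T=6n-12$, i.e.\ an \emph{average} of almost six empty triangles per vertex. Closing this factor-two gap requires global input. I would pursue two complementary routes. The first is a structural classification: prove that the constraints $l(v)\ge 2$ and $t(v)\le l(v)+1$ for all $v$ force the rotation scheme, up to relabeling and the sphere symmetries used in Section~\ref{sec:star}, into a single highly symmetric family with one member per $n$ that generalizes \figref{fig:nolucky}, and then verify $2n-4$ empty triangles for that family by a direct count. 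The second is a sharpened counting scheme: refine the proof of Proposition~\ref{prop:lonely3} so that each triangle witnessing the loneliness of $v$ contributes several \emph{distinct} empty triangles incident to $v$ (in the uncrossed case of that proof, the three sub-triangles $vv_iv_j$ are all empty and incident to $v$), thereby amplifying $t(v)$ well beyond $3$ once $l(v)\ge 2$, and then control the overlaps between witnesses so that the amplified per-vertex bounds sum to at least $6n-12$.

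Finally, I expect the principal obstacle to be establishing that no-lucky drawings really are as rigid as the $n=8$ computation suggests. The realizability machinery of Section~\ref{sec:small} (and Kyn\v{c}l's polynomial-time characterization of realizable rotation schemes of $K_n$) offers a combinatorial handle, and the exhaustive verification that for $n=8$ only a single no-lucky rotation scheme exists strongly suggests uniqueness in general; but turning this empirical rigidity into a proof valid for all $n$ — independently of the averaging argument, which is intrinsically too weak here — is exactly the gap that keeps the statement at the level of a conjecture.
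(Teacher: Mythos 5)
There is a genuine gap here, and you name it yourself: your text is a research program, not a proof, and the statement it targets is in fact stated in the paper as an open \emph{conjecture} with no proof to compare against. What you do establish is correct and coincides exactly with the paper's own discussion in Section~\ref{sec:small}: if $D(K_n)$ contains a lucky vertex $v$, then deleting $v$ and re-inserting it changes the count of empty triangles by $t(v)-l(v)\ge 2$, so the induction from the proof of Theorem~\ref{thm:n}, run with the stronger hypothesis $2(n-1)-4$, yields $2n-4$ --- the paper makes precisely this remark, noting it covers Harborth's minimizing example and the drawings of \figref{fig:monotone}. Your derivation that a no-lucky drawing forces $l(v)\ge 2$ and $t(v)\le l(v)+1$ for every vertex (via Corollary~\ref{cor:2star} and Proposition~\ref{prop:lonely3}) is also sound, and your arithmetic showing the averaging bound $\sum_v t(v)\ge 3n$ falls short of the needed $6n-12$ correctly explains why the paper's tools cannot close the case.

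The unresolved step is the no-lucky case, and neither of your two proposed routes is carried out. The structural-classification route asserts, without argument, that the constraints force a single rigid family generalizing \figref{fig:nolucky}; but the paper's evidence for this is an exhaustive computation at $n=8$ only (one realizable no-lucky rotation scheme out of $5\,370\,725$), and nothing in the paper --- or in your sketch --- shows that no-lucky drawings for $n>8$ are unique, form a coherent family, or even exist; note that a proof of \emph{non-existence} for large $n$ would equally settle the conjecture, and you do not rule that out either. The sharpened-counting route founders on exactly the overlap problem you mention: distinct loneliness witnesses for $v$ can produce the same empty triangles incident to $v$ (indeed a no-lucky vertex has $t(v)\le l(v)+1$, so amplification \emph{provably fails} pointwise, and any fix must redistribute counts globally across vertices, which you do not specify). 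So your proposal faithfully reproduces the paper's reduction --- lucky case done, no-lucky case open --- but stops at the same wall the authors do; as a proof of the conjecture it is incomplete, with the missing content being the entire hard case.
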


A triangulation is a maximal, crossing-free drawing of a graph such that every face is an empty triangle. In
the plane, the outer face can be an exception, i.e., it might be a
larger face.  It is interesting to observe that any triangulation of
$n$ points on the sphere has $2n-4$ triangular faces. Equivalently,
any triangulation of a set of~$n$ points in the plane with triangular
convex hull consists of~$2n-5$ triangles plus the outer, triangular
face.

A \emph{geometric} graph consists of vertices which are embedded as
points in the plane, and edges which are straight line segments connecting two
such points. It is easy to see that any complete geometric graph
contains a maximal crossing-free sub-graph, that is, a
triangulation. In contrast, it is NP-complete to decide whether a
general (non-complete) geometric graph contains a triangulation as a
sub-graph~\cite{lloyd77}.

Note that in the non-geometric case a good drawing might contain
$2n-4$ empty triangles, but, as these triangles might overlap, no
triangulation as a sub-drawing. See for
example~\figref{fig:lonely_crossing_int} where in any crossing-free
sub-drawing one of the faces has to be at least a quadrilateral. Even if
we allow the outer face of a triangulation in the plane to be larger,
there exist good drawings which do not contain such a triangulation as
a sub-drawing. Thus we raise the following question: What is the
complexity of deciding whether or not a good drawing $D(K_n)$ contains
a triangulation as a sub-drawing?

\subsection*{Acknowledgements.}
This work was initiated during a research visit of Pedro Ramos and Vera Sacrist\'an in May 2013 in Graz, Austria.
Research of Oswin Aichholzer is partially supported by the ESF EUROCORES programme EuroGIGA---CRP `ComPoSe', Austrian Science Fund (FWF): I648-N18.
Research of Thomas Hackl is supported by the Austrian Science Fund (FWF): P23629-N18 `Combinatorial Problems on Geometric Graphs'.
Alexander Pilz is a recipient of a DOC-fellowship of the Austrian Academy of Sciences at the Institute for Software Technology, Graz University of Technology, Austria.
Research of Pedro Ramos is partially supported by MEC grant MTM2011-22792 and by the ESF EUROCORES programme EuroGIGA, CRP ComPoSe, under grant EUI-EURC-2011-4306, for Spain.
Vera Sacrist\'{a}n is partially supported by projects MTM2012-30951, Gen.\ Cat.\ DGR 2009SGR1040, and ESF EUROCORES programme EuroGIGA, CRP ComPoSe, under grant EUI-EURC-2011-4306, for Spain.

{
\bibliographystyle{abbrv}
\bibliography{holesbib}
}

\end{document}